\newcommand{\be}{\begin{equation}}
\newcommand{\ee}{\end{equation}}
\newcommand{\ba}{\begin{eqnarray}}
\newcommand{\ea}{\end{eqnarray}}
\newcommand{\ketbra}[2]{|#1\rangle \langle #2|}
\newcommand{\tr}{\operatorname{Tr}}
\newtheorem{definition}{Definition}
\newtheorem{proposition}{Proposition}
\begin{document}

\title{Superunsteerability as a quantifiable resource for random access codes assisted by Bell-diagonal states}  

\author{C. Jebarathinam}
  \email{jebarathinam@gmail.com}
 \affiliation{S. N. Bose National Centre for Basic Sciences, Block JD, Sector III, Salt Lake, Kolkata 700 106, India}   
 \affiliation{Department of Physics and Center for Quantum Frontiers of Research \& Technology (QFort), National Cheng Kung University, Tainan 701, Taiwan}
 \author{Debarshi Das}
\email{dasdebarshi90@gmail.com}
\affiliation{Centre for Astroparticle Physics and Space Science (CAPSS), Bose Institute, Block EN, Sector V, Salt Lake, Kolkata 700 091, India}

 \author{Som Kanjilal}
\affiliation{Centre for Astroparticle Physics and Space Science (CAPSS), Bose Institute, Block EN, Sector V, Salt Lake, Kolkata 700 091, India}

\author{R.   Srikanth} 
\address{Poornaprajna Institute of  Scientific Research Bangalore-
  560    080,   Karnataka,    India}

\author{Debasis  Sarkar}
\author{Indrani Chattopadhyay}
\affiliation{Department of Applied Mathematics, University of Calcutta, 92, A.P.C. Road, Kolkata-700009, India.}

   \author{A.    S.     Majumdar}
\email{archan@bose.res.in}
   \affiliation{S. N. Bose National Centre for Basic Sciences, Block JD, Sector III, Salt Lake, Kolkata 700 106, India} 
   
\begin{abstract}
We show how nonclassical correlations in local bipartite states can act as a resource for quantum information processing. Considering the task of quantum random access codes (RAC) through separable Bell-diagonal states, we demonstrate the advantage of superunsteerability over classical protocols assisted with two-bits  of shared randomness. We propose a measure of superunsteerability, which 
quantifies nonclassicality beyond quantum steering, and obtain its analytical expression for Bell-diagonal states in the context of the two- and three-setting steering scenarios that are directly related to the quantum $2 \to 1$ and $3 \to 1$ RAC protocols, respectively. The maximal values of our quantifier yield the optimal quantum efficiency for both of the above protocols, thus showing that superunsteerability provides a precise characterization of the nonclassical resource for implementing RACs with separable Bell-diagonal class of states. 
\end{abstract}
\pacs{03.65.Ud, 03.67.Mn, 03.65.Ta}

\maketitle 
   \date{\today}
   
   \section{Introduction}  It is well-known that quantum entanglement  in  states of composite systems leads to stronger than classical correlations, such as Bell nonlocality \cite{bel64} and  quantum steering \cite{Sch35,WJD07}. Nonclassical features of entanglement provide quantum advantage for information processing tasks.  Bell nonlocality certifies the presence of entanglement in a device-independent way, serving as resource for device-independent quantum key distribution \cite{BHK05}, and generation of genuine randomness
 \cite{PAM+10}. Quantum steering which is a weaker form of nonclassical correlations  \cite{WJD07}, is a useful resource for one-sided device-independent quantum key distribution \cite{BCW+12} and generation of  randomness \cite{PCS+15}. The violation of a steering inequality  certifies  entanglement in the scenario where measurement devices on only one  side are trusted. 

Though entanglement is used to accomplish a multitude of quantum information tasks, it is an 
expensive resource, making it hard to prepare and preserve the required nonlocal correlations. 
Nonclassicality, on the other hand, is not restricted to entangled states. Certain separable states are known to provide quantum advantage for  tasks such as deterministic quantum computation \cite{DSC08} and quantum cryptography \cite{Pir14}. It has been argued that quantum discord \cite{OZ01,HV01} characterizes such nonclassicality of separable states \cite{GHA12,RP14}.  Nonvanishing quantum discord enables better efficiency over classical protocols using limited shared randomness for generating random access codes (RACs) using  bipartite separable states \cite{BP14}, though the amount of discord \cite{DVB10} does not specify the usefulness of the corresponding state. Additionally, it is possible to extract operationally nonlocal features of separable states through measurement \cite{LF11}.

In realistic information processing scenarios shared classical randomness may not be a free resource. 
When the amount of accessible randomness is limited, separable states may provide further benefit over classical protocols \cite{DW15,CJ,JAS17,Jeb17,JDS+18,DBD+17,DAS201855}. In such scenarios the notion of superlocality \cite{DW15,JAS17,JDS+18} refers
to the requirement for a larger dimension of the preshared randomness to simulate the respective local correlations, than that of the quantum systems generating them. Recently, by generalizing the concept of superlocality, the quantumness of certain unsteerable correlations has been pointed out by defining the notion of superunsteerability \cite{DBD+17,DAS201855}. It is the requirement for a larger dimension of the classical variable that the steering party has to preshare with the trusted party for simulating the unsteerable correlations than that of the quantum systems reproducing them. It is thus pertinent to ask as to what extent the domain of quantum advantage may further be extended by enabling practical tasks by even unsteerable states. 

In the present work we show that it is indeed possible to perform useful
quantum information processing tasks using the  nonclassicality associated with superunsteerability of certain class of separable states as resource in situations where an unlimited or infinite amount of preshared randomness is not available classically. Specifically, we consider the task of generating RACs which in addition to being a primitive information processing task, are of importance in understanding the scope of physical theories \cite{PS15}. RACs have been used to demarcate the boundary between quantum and post-quantum theory in terms of the information causality principle \cite{RFBW14}. The encoding and decoding measurement strategies employed in the RACs using shared bipartite states \cite{BP14}, resemble closely the scenario of quantum steering. This enables us to recast the $2 \to 1$ and $3 \to 1$ RAC protocols in terms of the 2- and 3-setting steering scenarios, respectively. 

Here we focus on the above RAC protocols assisted by a  generic class of mixed states, {viz.} Bell-diagonal states which have been widely used in information processing \cite{PM19}. By formulating a measure of
superunsteerability as an information theoretic resource,  we show that in the presence of two-bits of shared classical randomness,  superunsteerability identifies the optimal separable states within Bell-diagonal class of states for the quantum RACs.  Our analysis addresses an important  question as to which is the precise resource responsible for enabling RACs assisted by Bell-diagonal states without access to entanglement and infinite shared randomness.

The rest of the paper is arranged as follows. In Section \ref{sec2} we provide brief discussions on quantum steering, superunsteerability and  RAC protocols. In Section \ref{sec3} we propose a quantifier of superunsteerability, termed as ``Schr\"{o}dinger strength", and obtain its analytical expressions  for Bell-diagonal states in the context of the two- and three-setting steering scenarios. How superunsteerability serves as resources for quantum RACs using separable Bell-diagonal states is presented in Section \ref{sec4}. Finally, we conclude in Section \ref{sec5}.

   \section{Quantum steering and random access codes}\label{sec2}
   
 Consider a  scenario where Alice and Bob share an unknown quantum system
described by $\rho_{AB}\in \mathcal{B}(\mathcal{H}_A \otimes \mathcal{H}_B)$, with Alice 
performing a set of black-box measurements which produces a set of conditional states on Bob's side. 
Such a scenario is called one-sided device-independent since
Alice's measurement operators ${\bf{M}}_A:=\{M_{a|x}\}_{a,x}$  are unknown. The steering scenario 
is completely characterized by an assemblage \cite{Pus13} ${\pmb{\sigma}}:=\{\sigma_{a|x}\}_{a,x}$
which is the set of unnormalized conditional states on Bob's
side. Each element
in the assemblage ${\pmb{\sigma}}$ 
is given by $\sigma_{a|x}=p(a|x)\rho_{a|x}$,  where $p(a|x)$ is the conditional probability of getting the outcome of Alice's measurement  and
$\rho_{a|x}$ is the normalized conditional state on Bob's side.
Quantum theory predicts the assemblage as $\sigma_{a|x}=\tr_A ( M_{a|x} \otimes \openone \rho_{AB})$, $\forall$ $\sigma_{a|x} \in {\pmb{\sigma}}$.
Suppose Bob performs a set of known positive operator valued measurements (POVM) ${\bf{M}}_B:=\{M_{b|y}\}_{b,y}$ on ${\pmb{\sigma}}$. Then the
scenario is characterized by the box (or, correlation) $P(ab|xy)$= $\{p(ab|xy)\}_{a,x,b,y}$  = $\Big\{ \tr\big[M_{b|y} \sigma_{a|x} \big] \Big\}_{a,x,b,y}$, which is a set of joint probability distributions $p(ab|xy)$ for all possible $a$, $x$, $b$ and $y$.

A box $P(ab|xy)$ detecting steerability from Alice to Bob
 does not have a decomposition of the form given by 
\be
P(ab|xy)= \sum^{d_\lambda-1}_{\lambda=0} p(\lambda) P(a|x,\lambda) P(b|y; \rho_\lambda). \label{LHV-LHS}
\ee
Here the box $P(a|x,\lambda)$ = $\{p(a|x,\lambda)\}_{a,x}$ is the set of arbitrary probability distributions $p(a|x,\lambda)$ conditioned upon shared randomness/hidden variable $\lambda$ occurring with the probability $p(\lambda)$; $\sum^{d_\lambda-1}_{\lambda=0} p(\lambda) =1$. On the other hand, $P(b|y; \rho_\lambda)$ = $\{p(b|y; \rho_\lambda) \}_{b,y}$ is the set of quantum probability distributions $p(b|y; \rho_\lambda)=\tr(\Pi_{b|y}\rho_\lambda)$, arising from some local hidden state $\rho_\lambda$, and
$d_\lambda$ denotes the dimension of the shared randomness/hidden variable $\lambda$.
The above decomposition is called a local hidden variable-local hidden state (LHV-LHS) model.
If the correlation arising from the given steering scenario does not have steerability, it can still have nonclassical correlation 
when there is a restriction on the amount of shared randomness \cite{DBD+17,DAS201855}.
 Superunsteerability  \cite{DBD+17,DAS201855} is defined as the requirement for a larger dimension of the classical variable that the steering party (Alice) preshares with the trusted party (Bob) for simulating the given unsteerable correlations, than that of the quantum state which reproduces them.
Suppose we have a quantum state in $\mathbb{C}^{d_A}\otimes\mathbb{C}^{d_B}$
and measurements which produce a unsteerable bipartite  box $P(ab|xy)$ := $\{ p(ab|xy) \}_{a,x,b,y}$.
Then, superunsteerability (SU) holds if and only if (iff) there is no decomposition of the box in the form given by Eq. (\ref{LHV-LHS}),
with  $d_\lambda\le d_A$.  Superunsteerability provides an operational characterization to the quantumness of unsteerable  boxes \cite{DBD+17}. Note that superunsteerability is defined in the standard steering scenario. The only difference is that there is a constraint on the dimension of the resources producing the correlations in the context of superunsteerability. In the standard steering scenario, the dimension of the hidden variable and the Hilbert space 
dimension of the steering party's side are unbounded which implies that any unsteerable correlation 
can be reproduced by a LHV-LHS model as well as by performing local measurements on an appropriate  separable quantum state which admits a
classical-quantum decomposition \cite{CS17}. 
In the superunsteerability scenario (where the dimension of the resource is restricted),
if the given unsteerable correlation cannot be simulated by a LHV-LHS model, but can be simulated by 
local measurements on a quantum system, then the given unsteerable correlation is superunsteerable. Hence, superunsteerability refers to dimensional advantage in simulating certain unsteerable correlations using quantum resources.

Random access codes (RACs) are a class of communication tasks where one party tries to guess the data held by the other party with limited amount of communication. A quantum version of RACs was constructed and shown to provide advantage over  classical RACs  \cite{AAA+02}. Here we focus on RACs where
a shared bipartite quantum state is used as resource \cite{PZ10}. Recently, it has been shown \cite{BP14} that in the  case when shared classical randomness is not a free resource, even certain separable states  may provide quantum advantage. 
Protocols for quantum $3 \to 1$ and $2 \to 1$ RACs  assisted with separable  two-qubit Bell-diagonal states  provide better efficiencies compared to classical protocols aided with two-bits of shared  randomness. However, for these two RACs, the optimal quantum advantage is provided by different  Bell-diagonal states possessing inequivalent values of geometric discord \cite{DVB10}. 

In an RAC denoted by $n\xrightarrow{p}1 $, Alice has $n$ bits ($n=2,3$) of information, 
$x=(x_1\cdots x_n)$, which she encodes in a $1$ bit communication. Alice  shares a pair of correlated qubits with Bob and encodes her $n$ bits into her part of the shared qubit by performing a quantum measurement. She communicates $1$ classical bit which is the result of this quantum measurement to Bob who  figures out the $i$th bit, $x_i$, of
Alice's input string $x$ with  probability ${\rm Pr} ({{b}_{i}}={{x}_{i}})$ by performing a suitable quantum measurement.  
 The efficiency of the RAC is given by the probability ${{P}_{{\rm min} }}$ of Bob's correct guess in the worst-case scenario, i.e., ${{P}_{{\rm min} }}={{{\rm min} }_{x,i}}{\rm Pr} ({{b}_{i}}={{x}_{i}})$.  If no randomness is allowed in this scenario, ${{P}_{{\rm min} }}=0$, as there is always a bit that Bob guesses wrongly. In the presence of shared classical randomness $r$ which occurs with probability $p_r$, the efficiency ${{P}_{{\rm min} }}$ is additionally averaged over the assisting random bits, ${{P}_{{\rm min} }}={{{\rm min} }_{x,i}}{{\sum }_{r}}{{p}_{r}}{\rm Pr} ({{b}_{i}}={{x}_{i}}|r)$.  A classical $2 \xrightarrow{p}1$ RAC assisted with two bits  has  ${{P}_{{\rm min} }}\leq \frac{2}{3}$, and  for the similar
 $3 \xrightarrow{p}1$ RAC,  ${{P}_{{\rm min} }}\leq \frac{1}{2}$ \cite{BP14}.

Note that  in the above two RACs, information about inputs of Alice are contained in the conditional states of Bob who tries to decode this information  by performing measurements on these  states. Such a scenario of decoding and encoding measurement strategies  of Alice and Bob in the RACs resembles that of the steering scenario. The  encoding operation of $2 \to 1$ RAC makes use of correlations present in the two-qubit states along the $x$ and $y$ axes, and for $3 \to 1$ RAC, the  encoding operation makes use of correlations present in the two-qubit states along the $x$, $y$ and $z$ axes. It is thus natural to ask whether quantum correlation as quantified by superunsteerability in the two- and three-setting steering scenarios can identify optimal separable Bell diagonal  states for the $2 \to 1$ and $3 \to 1$ RACs, respectively.  The above question motivates us to define a measure of superunsteerability for the given steering scenarios.

\section{Quantifying nonclassicality beyond steering}\label{sec3} 

A category of nonclassicality of local correlations captured by superlocality \cite{DW15} has been quantified by a measure called Bell strength  \cite{JAS17}. Analogously,  we define a quantity called Schr\"{o}dinger strength  to quantify the nonclassicality of unsteerable boxes which demonstrate superunsteerability. Note that a box $P(ab|xy)$ in the $n$-setting steering scenario is the set of joint probabilities $p(ab|xy)$ for all possible $a$, $b$ and for all $x$ $\in$ $\{0, 1, 2, ..., n-1\}$ and $y$ $\in$ $\{0, 1, 2, ..., n-1\}$. In this scenario any box $P(ab|xy)$ can be decomposed into a convex mixture of a steerable part and an unsteerable part,
\be
P(ab|xy)=p P_{S}(ab|xy)+(1-p) P_{US}(ab|xy), \label{steerstrength}
\ee
where $P_{S}(ab|xy)$ is a steerable box and $P_{US}(ab|xy)$  is an unsteerable box which may be superunsteerable; $0 \leq p \leq 1$. 
The weight of the box $P_{S}(ab|xy)$ minimized over all possible decompositions of 
the  form (\ref{steerstrength}) is called steering cost of the box $P(ab|xy)$ furnishing a measure of quantum steering \cite{DDJ+18}.

On the other hand, the weight of the box $P_{S}(ab|xy)$ maximized over all possible decompositions of 
the  form (\ref{steerstrength}) is called the Schr\"{o}dinger strength of the box $P(ab|xy)$, given by
\be
S_{SU_n} \Big( P(ab|xy) \Big):=\max_{\text{decompositions}}p.
\ee
Here, $0 \leq S_{SU_n} \leq 1$ (since, $0 \leq p \leq 1$). 
The optimal decomposition that gives the Schr\"{o}dinger strength of the box is called the canonical decomposition in which the steerable part is an extremal steerable box. Such optimal decomposition is analogous to the optimal decomposition that gives Svetlichny strength of a tripartite correlation studied in Refs. \cite{Jeb17,JDS+18} which quantifies 
genuine nonclassicality.

To illustrate the above minimization and maximization to obtain the steering cost and  Schr\"{o}dinger strength, respectively, of a  box $P(ab|xy)$, let us consider the following
family of correlations:
\be \label{BB84}
P(ab|xy)=\frac{1+(-1)^{a+b+xy}\delta_{x,y}V}{4}, \quad a,b,x,y \in \{0,1\}
\ee
which is called white-noise BB84 family in Ref. \cite{DDJ+18}. 
In the context of a steering scenario where Alice performs two black-box dichotomic 
measurements and Bob performs two projective qubit measurements in mutually unbiased
bases, the above box 
detects quantum steering for $V > 1/\sqrt{2}$ and superunsteerability for $V>0$
as shown in Ref. \cite{DBD+17}. In Ref. \cite{DDJ+18}, it has been demonstrated
that the optimal decomposition that gives the steering cost of the box (\ref{BB84})
is given by 
\be \label{BB84dec}
P(ab|xy)=\frac{\sqrt{2}V-1}{\sqrt{2}-1}P_{Ext}(ab|xy)+ \left(1- \frac{\sqrt{V}-1}{\sqrt{2}-1}\right) P_{US}(ab|xy),
\ee
with $V \ge 1/\sqrt{2}$, 
where 
\be
P_{Ext}(ab|xy)=\frac{1+(-1)^{a+b+xy}\delta_{x,y}}{4}, \quad a,b,x,y \in \{0,1\}
\ee
is an extremal steerable box and $P_{US}$ is an unsteerable box which is the box (\ref{BB84}) 
with $V=1/\sqrt{2}$.

Even for $ 0 < V \le 1/\sqrt{2}$, the box (\ref{BB84}) can be decomposed as a convex mixture 
of the extremal steerable box and an unsteerable box as follows:
\be \label{BB84dec1}
P(ab|xy)=V P_{Ext}(ab|xy)+ (1-V) P_N(ab|xy),
\ee
where $P_N(ab|xy)$ is the white noise for which $P_N(ab|xy)=1/4$ for all $a,b,x,y$.
Note that the unsteerable box in the decomposition given by Eq. (\ref{BB84dec})
still can be decomposed as a convex mixture of the extremal steerable box and the white noise
to obtain the decomposition given by Eq. (\ref{BB84dec1}).
Moreover, the white noise can only be decomposed in terms of extremal 
steerable boxes as 
an uniform mixture of suitable extremal steerable boxes.
Thus, the weight of the extremal steerable box in the decomposition given by Eq. (\ref{BB84dec1})
is maximal over all possible decomposition. Such a maximized weight of the extremal steerable box
is called the Schr\"{o}dinger strength.
Note that even when the box (\ref{BB84}) has the steering cost equals to zero for $0 < V \le 1/\sqrt{2}$, 
it has a nonvanishing Schr\"{o}dinger strength  in this range  which implies the presence of superunsteerability.

Before proceeding, we want to mention that in the $n$-setting steering scenario, Alice's measurement settings and Bob measurement settings are denoted by $A_x$ and $B_y$ respectively, where $A_x$ $\in$ $\tilde{A}$ = $\{A_0, A_1, A_2, ..., A_{n-1} \}$ and $B_y$ $\in$ $\tilde{B}$ = $\{ B_0, B_1, B_2, ..., B_{n-1} \}$. The sets of Alice's and Bob's measurement settings are denoted by $\tilde{A}$ and $\tilde{B}$ respectively. We will now present the definition of Schr\"{o}dinger strength of a bipartite state.

\begin{definition}
 The Schr\"{o}dinger strength of a bipartite state $\rho$ is defined as:
 \be
 S_{SU_n}(\rho)=\max_{\tilde{A}, \tilde{B}} S_{SU_n}\Big(P(ab|xy;\rho) \Big).
 \ee
Here $S_{SU_n}\Big(P(ab|xy;\rho) \Big)$ is the Schr\"{o}dinger strength of the box $P(ab|xy;\rho)$ = $\{ p(ab|xy;\rho) \}_{a,x,b,y}$, where $p(ab|xy;\rho)$ = Tr$[ (M_{a|x} \otimes M_{b|y}) \rho ]$ is the joint probability distribution of getting the outcomes $a$ and $b$ when measurements $A_x$ and $B_y$ are performed locally by Alice and Bob, respectively, on the shared bipartite state $\rho$. $M_{a|x}$ is the measurement operator corresponding to the measurement settings $A_x$ and outcome $a$. $M_{b|y}$ is defined similarly. Here the maximization is taken over all possible sets $\tilde{A}$ = $\{A_0, A_1, A_2, ..., A_{n-1} \}$ and $\tilde{B}$ = $\{ B_0, B_1, B_2, ..., B_{n-1} \}$.
\end{definition}

Let us now evaluate the Schr\"{o}dinger strength of 
the Bell-diagonal two-qubit states in the following two scenarios:
(i) Alice performs two black-box dichotomic measurements and Bob performs 
projective qubit measurements corresponding to two arbitrary mutually unbiased bases
and 
(ii) Alice performs three black-box dichotomic measurements and Bob performs 
projective qubit measurements corresponding to three arbitrary mutually unbiased bases. 
For these two scenarios, the  
following linear steering inequalities \cite{CJW+09}:
\be
\frac{1}{\sqrt{n}} \left|\sum^n_{k=1} \braket{A_k \otimes B_k}\right| \le 1, \label{FJWR}
\ee
with $n=2,3$, respectively, can be used to witness steerability of two-qubit states from Alice to Bob.
Here, $\braket{A_k \otimes B_k}=\tr \left( A_k \otimes B_k \rho_{AB}\right)$.

The Bell-diagonal states admit the form,
\be
\tau=\frac{1}{4}\left(\openone \otimes \openone +\sum_i c_i \sigma_i \otimes \sigma_i \right)=\sum_{ab} \lambda_{ab} \ketbra{\beta_{ab}}{\beta_{ab}}, \label{BD}
\ee
where $\lambda_{ab}$, here $a,b\in \{0,1\}$, denote the eigenvalues of the Bell-diagonal states which are given by 
$\lambda_{ab}=\frac{1}{4}\left[1+(-1)^a c_1 - (-1)^{a+b} c_2+(-1)^b c_3 \right]$
and $\ket{\beta_{ab}}=\frac{1}{\sqrt{2}}(\ket{0,b}+(-1)^a \ket{1,1 \oplus b}$
are the Bell states.  Here, we take $|c_1| \ge |c_2| \ge |c_3|$ with local unitary transformations. For this family of states, the eigenvalues satisfy the ordering $\lambda_{01} \ge  \lambda_{00} \ge \lambda_{10} \ge \lambda_{11}$. This family of states are entangled iff $\lambda_{01} > 1/2$. The geometric discord \cite{DVB10} of this family of states is given by  $D(\tau)=\frac{1}{2}(c^2_2+c^2_3)$.

\begin{proposition}
The Schr\"{o}dinger strength of the Bell-diagonal states in the two-setting steering scenario is given by
\ba
S_{SU_2}(\tau)=|c_2|. \label{SSBD2}
\ea 
\end{proposition}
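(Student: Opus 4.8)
The plan is to collapse the double optimization---over measurement settings and over convex decompositions---into a single matrix problem. Since $\tau$ has maximally mixed marginals, all one-party expectation values vanish, so the box $P(ab|xy;\tau)$ is fixed entirely by the $2\times2$ correlation matrix with entries $\braket{A_x\otimes B_y}=\hat a_x^{\top}C\,\hat b_y$, where $C=\mathrm{diag}(c_1,c_2,c_3)$, Bob's two mutually unbiased measurements are orthonormal Bloch directions $\hat b_0\perp\hat b_1$, and Alice's two dichotomic observables are unit Bloch vectors $\hat a_0,\hat a_1$. Writing $B=[\hat b_0\ \hat b_1]$ (orthonormal columns) and $A=[\hat a_0\ \hat a_1]$ gives $T=A^{\top}CB$, so that $S_{SU_2}(\tau)=\max_{A,B}S_{SU_2}(P_T)$, with $P_T$ the box determined by $T$.

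For fixed $T$ I would evaluate $S_{SU_2}(P_T)$ through the canonical decomposition used for the BB84 example above: the extremal steerable boxes carry unit correlations $(\pm1,\pm1)$ along Bob's two settings, and a diagonal box $\mathrm{diag}(t_0,t_1)$ is unsteerable exactly when $t_0^2+t_1^2\le1$. Aligning both parties with the two eigendirections of $C$ that carry the largest coefficients yields $T=\mathrm{diag}(c_1,c_2)$, which decomposes as $\mathrm{diag}(c_1,c_2)=|c_2|\,E+(1-|c_2|)\,R$, where $E=(\mathrm{sgn}\,c_1,\mathrm{sgn}\,c_2)$ is extremal steerable and $R=\bigl(\mathrm{sgn}(c_1)(|c_1|-|c_2|)/(1-|c_2|),\,0\bigr)$ is unsteerable because $|c_1|\le1$. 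This gives the achievability bound $S_{SU_2}(\tau)\ge|c_2|$.

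The main obstacle is the converse, $S_{SU_2}(P_T)\le|c_2|$ for every admissible $T$. The natural route is to identify the Schr\"odinger strength of a marginal-free box with the smaller singular value $\sigma_{\min}(T)$ (its isotropic steerable content), which reproduces both the BB84 value and the aligned value $|c_2|$; the optimization then becomes $\max_{A,B}\sigma_{\min}(A^{\top}CB)=|c_2|$. When Alice's settings are orthogonal this is immediate from the standard inequality $\sigma_k(A^{\top}CB)\le\|A\|_{\mathrm{op}}\,\sigma_k(C)\,\|B\|_{\mathrm{op}}=\sigma_k(C)$, whence $\sigma_{\min}(T)=\sigma_2(T)\le\sigma_2(C)=|c_2|$, with equality at alignment.

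The delicate step is to rule out non-orthogonal Alice directions, which inflate $\|A\|_{\mathrm{op}}$ and could a priori raise a singular value of $T$. I would show this inflation is self-defeating for $\sigma_{\min}$: as $\hat a_0\to\hat a_1$ the rows of $T$ become proportional and $\sigma_{\min}(T)\to0$, so the maximizer has $\hat a_0\perp\hat a_1$ lying in the plane of $\hat b_0,\hat b_1$, returning to the orthonormal case. Equivalently, any in-plane rotation equalizing the diagonal correlations necessarily generates off-diagonal correlations that strictly lower the smaller singular value; quantifying this trade-off using the ordering $|c_1|\ge|c_2|\ge|c_3|$ is where most of the work lies, and it pins the optimum exactly at $|c_2|$.
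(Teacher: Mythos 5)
Your achievability direction is sound and is essentially the paper's own: the box-level decomposition $\mathrm{diag}(c_1,c_2)=|c_2|\,E+(1-|c_2|)\,R$ is exactly what the paper's state-level decomposition $\tau=c_2\ketbra{\beta_{01}}{\beta_{01}}+(1-c_2)\rho_{sep}$ induces under the aligned measurements $A_0=B_0=\sigma_x$, $A_1=B_1=\sigma_y$. The paper, however, does not invoke your unproven criterion that $\mathrm{diag}(t_0,t_1)$ is unsteerable iff $t_0^2+t_1^2\le 1$; it exhibits an explicit dimension-two LHV--LHS model for the residual box (its Appendix A) and separately proves that the full box admits no dimension-two LHV--LHS model when $c_2>0$ (its Appendix B).

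The converse is where your proposal genuinely fails, and not only at the step you flag as ``where most of the work lies''---the route itself is unsound. Take Bob along $\hat{x},\hat{y}$ and Alice's unit Bloch vectors $\hat{a}_{0,1}=(\cos\theta,\pm\sin\theta,0)$, which are admissible black-box dichotomic measurements in this scenario. Then
\[
T=A^{\top}CB=\begin{pmatrix} c_1\cos\theta & c_2\sin\theta\\ c_1\cos\theta & -c_2\sin\theta\end{pmatrix},
\]
whose singular values are $\sqrt{2}\,c_1|\cos\theta|$ and $\sqrt{2}\,|c_2||\sin\theta|$ (note $T^{\top}T$ is diagonal). Choosing $\tan\theta=c_1/|c_2|$ equalizes them at $\sigma_{\min}(T)=\sqrt{2}\,c_1|c_2|/\sqrt{c_1^2+c_2^2}$, which is \emph{strictly larger} than $|c_2|$ whenever $c_1>|c_2|$ (e.g., $\approx 0.632$ versus $0.5$ for $c_1=1$, $c_2=1/2$). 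Hence $\max_{A,B}\sigma_{\min}(A^{\top}CB)>|c_2|$ in general: tilting Alice's directions away from orthogonality is not ``self-defeating'' but strictly profitable for $\sigma_{\min}$, so your claimed structure of the maximizer (orthogonal Alice settings) is false, and the trade-off you hoped to quantify goes the wrong way. This makes your two pillars mutually inconsistent with the proposition: if the identification $S_{SU_2}(P_T)=\sigma_{\min}(T)$ held (you assert it from only two data points, never prove it), the box above would have Schr\"odinger strength exceeding $|c_2|$ and the proposition would be false. So the identification must be abandoned, and with it the entire converse. The paper's converse never touches singular values: it argues at the state level that $c_2$ is the maximal weight an entangled component can carry in a decomposition of $\tau$ whose separable remainder produces only non-superunsteerable correlations, and uses this, together with the two appendix results above, to bound the Schr\"odinger strength of every box obtainable from $\tau$. (A further minor point: your reduction of the box to $T$ alone silently assumes Alice's dichotomic POVMs are unbiased; biased POVMs give Alice non-uniform marginals, so the box is not determined by $T$ in general.)
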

\begin{proof}
Up to local unitary transformations, the Bell-diagonal states can be reduced to the family of states with  $c_1 \ge c_2 \ge |c_3|$, here $c_1,c_2 \ge 0$.  With this constraint,
the following decomposition for the Bell-diagonal states can be obtained:
\be \label{BDdec2SS}
\tau=c_2 \ketbra{\beta_{01}}{\beta_{01}} + (1-c_2) \rho_{sep},
\ee
where 
\begin{align}
\rho_{sep}&=\frac{1}{1-c_2}\bigg( 2 (\lambda_{00}-\lambda_{10}) \frac{ \ketbra{\beta_{00}}{\beta_{00}} +\ketbra{\beta_{01}}{\beta_{01}} }{2} \nonumber \\
&+2 \lambda_{10} \frac{ \ketbra{\beta_{00}}{\beta_{00}} +\ketbra{\beta_{10}}{\beta_{10}} }{2} + 2 \lambda_{11} \frac{ \ketbra{\beta_{01}}{\beta_{01}} +\ketbra{\beta_{11}}{\beta_{11}} }{2}\bigg).
\label{sepD=0}
\end{align}
It can be checked that, for the measurements $A_0=B_0=\sigma_x$ and $A_1=B_1=\sigma_y$, the correlation arising from the 
above Bell-diagonal states has the following canonical decomposition:
\begin{align} 
P(ab|xy;\tau) = &c_2 P_S^{Ext} \Big(ab|xy;\ket{\beta_{01}} \langle \beta_{01} | \Big) \nonumber\\
&+ (1-c_2) P_{US}(ab|xy;\rho_{sep}),
\label{optsst}
\end{align}
where $P_S^{Ext} \Big(ab|xy;\ket{\beta_{01}} \langle \beta_{01} | \Big)$  is an extremal steerable correlation arising from the state $\ket{\beta_{01}}$ as this correlation maximally violates steering inequality (\ref{FJWR}) with $n=2$ and has steering cost equal to $1$ \cite{DDJ+18}. The correlation  $P_{US}(ab|xy;\rho_{sep})$ is an unsteerable correlation, which is not superunsteerable (proof provided in Appendix \ref{ap1}). On the other hand, the box  $P(ab|xy;\tau)$ given by Eq. (\ref{optsst}) demonstrates quantumness in the form of either superunsteerability, or EPR steering for $c_2>0$ (proved in  Appendix \ref{ap2}). Hence, the decomposition (\ref{optsst}) corresponds to the canonical decomposition that implies the Schr\"{o}dinger strength of the box $P(ab|xy;\tau)$.

Note that in the above optimal decomposition of the Bell-diagonal states given by Eq. (\ref{BDdec2SS}) 
the weight of the entangled part is maximized over all 
possible decompositions such that the separable part cannot give rise to superunsteerability for
the measurements used for obtaining the box given by Eq. (\ref{optsst}). 
The Schr\"{o}dinger strength of the box $P(ab|xy;\tau)$ is equal 
to the weight of the entangled part in the decomposition given by Eq. (\ref{BDdec2SS}). This implies that 
there cannot be any other box arising from the Bell-diagonal state having more Schr\"{o}dinger strength. Therefore, 
the Schr\"{o}dinger strength of the box $P(ab|xy;\tau)$ is also the Schr\"{o}dinger strength of 
the Bell-diagonal state.
\end{proof}

\begin{proposition}
The Schr\"{o}dinger strength of the Bell-diagonal states in the three-setting steering scenario  is given by
\ba
S_{SU_3}(\tau)=|c_3|. \label{SSBD3}
\ea 
\end{proposition}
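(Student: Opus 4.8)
The plan is to replay Proposition~1 with all three correlation directions active, replacing the critical weight $c_2$ by $|c_3|$. First I would use local unitaries---which act as $SO(3)$ rotations on $(c_1,c_2,c_3)$ and preserve the sign of $c_1c_2c_3$---to reach $c_1\ge c_2\ge|c_3|$ with $c_1,c_2\ge0$, and then select the Bell eigenstate $\ket{\beta_{ab}}$ whose $z$-correlation carries the same sign as $c_3$ (its $x,y$ signs being fixed by the product constraint). With the three mutually unbiased measurements $\sigma_x,\sigma_y,\sigma_z$, one of Bob's signs flipped so that this state's effective correlations become $(1,1,1)$, the box $P_S^{Ext}(\ket{\beta_{ab}})$ saturates (\ref{FJWR}) at $n=3$ with value $\sqrt3$ and has steering cost $1$, exactly as $\ket{\beta_{01}}$ did for $n=2$.

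The core construction is the three-setting analogue of Eq.~(\ref{BDdec2SS}): subtract the weight that annihilates the $z$-correlation,
\be
\tau=|c_3|\,\ketbra{\beta_{ab}}{\beta_{ab}}+(1-|c_3|)\,\rho_{sep},
\ee
so that $\rho_{sep}$ retains only its $x$ and $y$ correlations. The separability bound $w_x+w_y\le1$ then collapses to $\lambda_{11}\ge0$, i.e.\ to the validity of $\tau$, so $\rho_{sep}$ is automatically separable. Mirroring Eq.~(\ref{sepD=0}) I would display $\rho_{sep}$ as a convex mixture of the separable edge states $\tfrac12(\ketbra{\beta_{ab}}{\beta_{ab}}+\ketbra{\beta_{a'b'}}{\beta_{a'b'}})$, each correlated along a single Cartesian axis, yielding the canonical box decomposition $P(ab|xy;\tau)=|c_3|\,P_S^{Ext}+(1-|c_3|)\,P_{US}(\rho_{sep})$ with $P_{US}$ unsteerable.

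The step I expect to be the main obstacle is optimality of the weight $|c_3|$. Subtracting weight $p$ of the chosen extremal box drives the remainder's $z$-correlation linearly through zero at $p=|c_3|$; this is maximal, because for $p>|c_3|$ the $z$-entry reverses sign, meaning the remainder would overlap the oppositely-aligned extremal box and the Bell-state weight has been overdrawn. This reproduces the two-setting bound $\min(c_1,c_2)=c_2$ one dimension higher, giving $\min(c_1,c_2,|c_3|)=|c_3|$. The genuinely new subtlety---absent for $n=2$---is that at the optimum the remainder still has \emph{two} nonzero correlations and is therefore itself superunsteerable: since Bob's hidden states must average to the maximally mixed state, a two-valued $\lambda$ forces antipodal states and hence a rank-one correlation matrix, so two nonvanishing diagonal entries already require $d_\lambda\ge3$. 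Optimality can thus not be inferred from the remainder failing to be superunsteerable, as it was for $n=2$; instead I would invoke the tight unsteerability criterion for Bell-diagonal assemblages under three mutually unbiased measurements (through the appendices) to certify that no larger extremal-steerable weight can be extracted.

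Finally, paralleling Appendix~\ref{ap2}, I would check that $P(ab|xy;\tau)$ is nonclassical whenever $c_3\neq0$: it either violates the $n=3$ steering inequality, or---when unsteerable---admits no LHV--LHS model with $d_\lambda\le2$ by the same rank argument (three nonvanishing correlations forcing $d_\lambda\ge4$), hence is superunsteerable. Together with the explicit decomposition and its optimality this fixes the box Schr\"odinger strength at $|c_3|$; and since the mutually unbiased measurements are optimal while no other setting yields a larger value, $S_{SU_3}(\tau)=|c_3|$.
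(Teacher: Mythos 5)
Your construction reproduces the paper's: the same reduction to $c_1\ge c_2\ge|c_3|$ with $c_3\le 0$, the same state decomposition $\tau=|c_3|\,\ketbra{\beta_{01}}{\beta_{01}}+(1-|c_3|)\,\rho_{sep}$ annihilating the $z$-correlation (Eq.~(\ref{stdebd3})), the same measurements $\sigma_x,\sigma_y,\sigma_z$, and the same canonical box decomposition (Eq.~(\ref{optsst3})). Your shortcut that separability of $\rho_{sep}$ collapses to $\lambda_{11}\ge 0$ is a legitimate replacement for the paper's explicit edge-state mixture (Eq.~(\ref{sep3D=0})), and your rank argument (Bob's maximally mixed marginal forces the correlation matrix of a $d_\lambda$-valued LHV--LHS model to have rank at most $d_\lambda-1$) is in fact a cleaner route to the dimension bounds than the paper's case analysis in its appendices. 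You also correctly isolate the new difficulty relative to Proposition~1: the remainder is still correlated along two axes and may itself be superunsteerable.

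The genuine gap sits exactly at the step you flag as the main obstacle: the upper bound $S_{SU_3}\le|c_3|$. Your sign-reversal argument does not establish it. For $|c_3|<p\le c_2$ the remainder box has correlations $\bigl(\tfrac{c_1-p}{1-p},\,\tfrac{c_2-p}{1-p},\,\tfrac{p-|c_3|}{1-p}\bigr)$, and the sum of their absolute values is $\tfrac{c_1+c_2-|c_3|-p}{1-p}\le 1$ by precisely the positivity condition $c_1+c_2-|c_3|\le 1$ that you invoked to prove $\rho_{sep}$ separable. Hence the ``overdrawn'' remainder is again the box of a separable Bell-diagonal state under the same three measurements: a perfectly valid unsteerable box, whose reversed $z$-correlation is merely a relabeling of Bob's outcome for the third setting. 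So taking $p>|c_3|$ contradicts nothing about validity or unsteerability of the remainder, and optimality cannot be certified this way; this is exactly why the paper must argue at the level of the extremal-steerable-box fraction rather than via signs or positivity. Your fallback---invoking a ``tight unsteerability criterion for Bell-diagonal assemblages under three mutually unbiased measurements (through the appendices)''---is not a proof and has no counterpart in the paper: Appendix~\ref{ap3} only exhibits a $d_\lambda=4$ LHV--LHS model for $P_{US}(ab|xy;\rho_{sep})$ and contains no criterion bounding the extractable steerable weight. What the paper actually uses to close this step is a structural claim absent from your proposal: an extremal steerable box in the three-setting scenario necessarily carries correlations along all three axes, and a state correlated along only two axes can never produce a box containing a nonvanishing fraction of any extremal steerable box; since $\rho_{sep}$ is correlated only along $x$ and $y$, the extremal fraction of $P(ab|xy;\tau)$ is pinned at $|c_3|$. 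Without that claim, or a genuine substitute for it, your argument delivers only the lower bound $S_{SU_3}(\tau)\ge|c_3|$ together with nonclassicality for $c_3\neq 0$, not the equality (\ref{SSBD3}).
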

\begin{proof}
Up to local unitary transformations, the Bell-diagonal states can be reduced to the family of states with $c_1 \ge c_2 \ge |c_3|$, here $c_1,c_2 \ge 0$  and $c_3 \leq 0$. For this family of states, the eigenvalues satisfy the ordering
$\lambda_{01} \ge \lambda_{11} \ge  \lambda_{00} \ge \lambda_{10}$. With this constraint,
the following decomposition for the Bell-diagonal states can be obtained:
\be \label{stdebd3}
\tau=|c_3| \ketbra{\beta_{01}}{\beta_{01}} + (1-|c_3|) \rho_{sep},
\ee
where 
\begin{align}\label{sep3D=0}
\rho_{sep}&=\frac{1}{1-|c_3|}\bigg( 2 (\lambda_{00}-\lambda_{11}) \frac{ \ketbra{\beta_{00}}{\beta_{00}} +\ketbra{\beta_{01}}{\beta_{01}} }{2} \nonumber \\
&+2 \lambda_{10} \frac{ \ketbra{\beta_{01}}{\beta_{01}} +\ketbra{\beta_{10}}{\beta_{10}} }{2} + 2 \lambda_{11} \frac{ \ketbra{\beta_{00}}{\beta_{00}} +\ketbra{\beta_{11}}{\beta_{11}} }{2}\bigg).
\end{align} 
Note that, the above state $\rho_{sep}$ is a separable state. For the measurements $A_0=B_0=\sigma_x$, $A_1=B_1=\sigma_y$ and $A_2=B_2=\sigma_z$, the correlation arising from the 
above Bell-diagonal states has the following canonical decomposition:
\begin{align} 
P(ab|xy;\tau)= &|c_3| P_S^{Ext}\Big(ab|xy;\ket{\beta_{01}} \langle \beta_{01} | \Big)  \nonumber \\
&+ (1-|c_3|) P_{US}(ab|xy;\rho_{sep}),
\label{optsst3}
\end{align}
 where $P_S^{Ext}\Big(ab|xy;\ket{\beta_{01}} \langle \beta_{01} | \Big)$ is an extremal steerable correlation arising from the state $\ket{\beta_{01}}$ as this correlation maximally violates steering inequality (\ref{FJWR}) with $n=3$ and has steering cost equal to $1$ \cite{DDJ+18}.  The correlation $P_{US}(ab|xy;\rho_{sep})$ is an unsteerable correlation (in Appendix \ref{ap3}, we show that  $P_{US}(ab|xy;\rho_{sep})$ has a LHV-LHS model) which may have superunsteerability. 
 
Note that the extremal steerable box that maximally violates the linear steering inequality in the three setting scenario contains correlations in all three axes of the maximally entangled state.
 Superunsteerable states having correlations only in two axes never lead to a superunsteerable box which has a nonvanishing fraction of the extremal  steerable box. Correlations in all three axes of the superunsteerable two-qubit state are necessary for the state to generate the superunsteerable box having a nonzero fraction of an extremal steerable box. Since the separable state in the decomposition (\ref{stdebd3}) does not have correlations in all three axes, the superunsteerable box arising from this state does not have a nonvanishing
  fraction of the extremal steerable box.
 Hence, the fraction of the extremal steerable box in
the decomposition (\ref{optsst3}) is the maximal steerable box fraction,  implying an optimal decomposition with the Schr\"{o}dinger strength of the box $P(ab|xy;\tau)$.

 Thus, the above Bell diagonal state demonstrates quantumness either in the form of steering, or in the form of superunsteerability with a nonzero Schr\"{o}dinger strength 
  in the three-settings scenario if and only if $|c_3|$ is non-zero. We can, therefore, conclude that the Schr\"{o}dinger strength of the  Bell diagonal state is maximized by the above measurements, and 
the Schr\"{o}dinger strength of the box (\ref{optsst3}) also gives the Schr\"{o}dinger strength of the Bell-diagonal states.
\end{proof}

\section{Schr\"{o}dinger strength optimizes Random Access Codes}\label{sec4} 
 
  Let us consider the Bell-diagonal states with $|c_1| \ge |c_2| \ge |c_3|$.
 In the case of quantum $2 \xrightarrow{p}1$ RAC, the encoding operation  
 using  correlations along the $x$ and $y$ axes provides  efficiency  given by \cite{BP14}
 \be
   {{P}_{{\rm min} }}(\tau)=\frac{1}{2}\left(1+ \frac{1}{\sqrt{c^{-2}_1+c^{-2}_2}} \right). \label{2to1Eff}
   \ee

 Next, considering the case of the quantum $3 \xrightarrow{p}1$ RAC, the encoding operation  
 using  correlations along all three axes of the Bell-diagonal  provides  efficiency  given by \cite{BP14}
  is given by \cite{BP14}
 \be
   {{P}_{{\rm min} }}(\tau)=\frac{1}{2}\left(1+ \frac{1}{\sqrt{c^{-2}_1+c^{-2}_2+c^{-2}_3}} \right). \label{3to1Eff}
 \ee  
 
Now we will present two propositions which will enable us to present the main result of this paper.

\begin{proposition} 
Within the separable Bell-diagonal states with $c_1,c_2 \ge 0$, $c_3 \le 0$ and $c_1 \ge c_2 \ge |c_3|$,  
the state which has the maximal Schr\"{o}dinger strength $S_{SU_2}=1/2$ in the two-setting scenario 
corresponds to the state with $c_1=c_2=\frac{1}{2}$ and $c_3=0$.
\end{proposition}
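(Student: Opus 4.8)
The plan is to reduce the statement to a one-parameter optimization using Proposition 1, which already establishes that in the two-setting scenario $S_{SU_2}(\tau)=|c_2|$. Since the family under consideration has $c_2\ge 0$, we have $S_{SU_2}(\tau)=c_2$, so maximizing the Schr\"{o}dinger strength is equivalent to maximizing $c_2$ over the separable Bell-diagonal states obeying $c_1\ge c_2\ge |c_3|$ with $c_1,c_2\ge 0$ and $c_3\le 0$. Thus the entire proposition becomes a constrained maximization of a single parameter.

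First I would translate the separability requirement into a linear constraint on the correlation parameters. The excerpt records that a Bell-diagonal state is entangled iff $\lambda_{01}>1/2$, so it is separable iff $\lambda_{01}\le 1/2$. Substituting $\lambda_{01}=\frac{1}{4}(1+c_1+c_2-c_3)$ and using $-c_3=|c_3|$ for the sign convention $c_3\le 0$, this condition reads $c_1+c_2+|c_3|\le 1$. Next I would chain this separability inequality with the ordering constraints: since $c_1\ge c_2$ and $|c_3|\ge 0$, one obtains $2c_2\le c_1+c_2\le c_1+c_2+|c_3|\le 1$, whence $S_{SU_2}=c_2\le 1/2$. Tracing the equality conditions through this chain forces $c_1=c_2$ from the first inequality, $|c_3|=0$ from the second, and saturation of separability from the third, which together single out $c_1=c_2=\frac{1}{2}$, $c_3=0$ as the unique candidate maximizer.

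Finally I would confirm that this extremal point is an admissible state by checking that all four eigenvalues $\lambda_{ab}$ are nonnegative; at $c_1=c_2=\frac{1}{2}$, $c_3=0$ they evaluate to $\frac{1}{4},\frac{1}{2},\frac{1}{4},0$, so the maximizer lies on the boundary of the separable set and is a genuine density matrix with $S_{SU_2}=1/2$. The argument is essentially a single application of the ordering and separability constraints, and the only delicate point I anticipate is the careful bookkeeping of signs for $c_3\le 0$ when expressing both the separability bound and the eigenvalue positivity; once these are written correctly the optimization collapses to the displayed chain of inequalities. Notably, both the separability boundary $\lambda_{01}=1/2$ and the positivity boundary $\lambda_{11}=0$ are simultaneously active at the optimum, which is consistent with the maximizer sitting at a corner of the feasible region.
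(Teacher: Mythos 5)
Your proposal is correct and takes essentially the same approach as the paper: reduce the problem via Proposition 1 to maximizing $c_2$, translate separability into $\lambda_{01}\le 1/2$, i.e.\ $c_1+c_2+|c_3|\le 1$, and chain $2c_2\le c_1+c_2\le c_1+c_2+|c_3|\le 1$ to get $c_2\le 1/2$, attained at $c_1=c_2=\tfrac{1}{2}$, $c_3=0$. The only differences are cosmetic: the paper also (redundantly) invokes positivity of the smallest eigenvalue, $c_1+c_2+c_3\le 1$, before combining inequalities, whereas you dispense with it and instead add an explicit uniqueness argument and a check that the extremal point is a genuine density matrix --- harmless refinements of the same argument.
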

\begin{proof}
Note that the Bell-diagonal states  with 
 $ c_1,c_2 \ge 0, c_3 \le 0$ and $c_1 \ge c_2 \ge |c_3|$ are 
separable iff the largest eigenvalue  $\lambda_{01} \le 1/2$. Therefore,
 $ c_1 +c_2 - c_3  \le 1$. From the positivity of the lowest eigenvalue of the Bell-diagonal states, it follows that 
 $ c_1 +c_2 + c_3  \le 1$. Combining the above equations,  we get
 $c_1 +c_2 \le 1$ which for our chosen restrictions on the coefficients defined above, leads to $ 2c_2  \le  c_1 + c_2  \le  1$. This
 implies that $c_2 \le 1/2$ within separable Bell-diagonal states.
 Therefore, the separable Bell-diagonal state which has $c_1=c_2=1/2$ and $c_3=0$
 has the maximal Schr\"{o}dinger strength  in the context of the two-settings scenario. 
\end{proof}  
 The above values of the correlation coefficients $c_1=c_2=\frac{1}{2}$ and $c_3=0$, lead to the   
 efficiency ${{P}_{{\rm min} }}=\frac{1}{2}(1+\frac{1}{2\sqrt{2}})\approx 0.677$,  higher than the classical bound ${{P}^{2\to 1}_{{\rm cl} }}=0.667$.  This is obtained by using Eq.(\ref{2to1Eff}).

\begin{proposition}
Within the separable Bell-diagonal states with $c_1,c_2 \ge 0$, $c_3 \le 0$ and $c_1 \ge c_2 \ge |c_3|$,  
the state which has the maximal  Schr\"{o}dinger strength $S_{SU_3}=1/3$ in the three-setting scenario corresponds to the state with $|c_1|=|c_2|=|c_3|=\frac{1}{3}$.
\end{proposition}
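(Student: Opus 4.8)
The plan is to reduce the maximization to a short linear-programming bound over the correlation coefficients, mirroring the argument of the preceding proposition but now with the three-setting Schr\"{o}dinger strength. Since $S_{SU_3}(\tau)=|c_3|$ by Eq.~(\ref{SSBD3}), maximizing the Schr\"{o}dinger strength over the admissible family is the same as maximizing $|c_3|$ subject to the ordering $c_1\ge c_2\ge |c_3|\ge 0$ together with the separability constraint for the Bell-diagonal states.

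First I would translate the separability condition $\lambda_{01}\le 1/2$ into an inequality on the coefficients. Computing $\lambda_{01}=\frac{1}{4}(1+c_1+c_2-c_3)$ and imposing $\lambda_{01}\le 1/2$ yields $c_1+c_2-c_3\le 1$, which since $c_3\le 0$ reads $c_1+c_2+|c_3|\le 1$. This is the single binding constraint for the optimization; the remaining eigenvalue-positivity requirements (in particular $\lambda_{10}\ge 0$, the smallest eigenvalue in the ordering $\lambda_{01}\ge\lambda_{11}\ge\lambda_{00}\ge\lambda_{10}$ relevant here) give weaker inequalities that I would check do not become active at the optimum.

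Next I would combine this bound with the ordering. Because $c_1\ge|c_3|$ and $c_2\ge|c_3|$, the separability inequality gives $3|c_3|\le c_1+c_2+|c_3|\le 1$, hence $|c_3|\le 1/3$, so $S_{SU_3}\le 1/3$ throughout the separable family. Equality forces $c_1=c_2=|c_3|=1/3$, i.e.\ $c_1=c_2=\frac{1}{3}$ and $c_3=-\frac{1}{3}$, and I would close the argument by verifying that this point is a legitimate separable Bell-diagonal state: the four eigenvalues come out non-negative, explicitly $\lambda_{01}=1/2$ and $\lambda_{00}=\lambda_{10}=\lambda_{11}=1/6$, with $\lambda_{01}=1/2$ sitting exactly on the separability boundary.

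There is no substantial obstacle here; the statement is essentially a two-line bound once $S_{SU_3}=|c_3|$ is in hand. The only points requiring care are confirming that the separability inequality, rather than one of the eigenvalue-positivity constraints, is the active one capping $|c_3|$, and checking that the extremal coefficient assignment $c_1=c_2=\frac{1}{3},\,c_3=-\frac{1}{3}$ lands on an actual positive-semidefinite state rather than outside the physical region.
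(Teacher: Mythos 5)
Your proof is correct and follows essentially the same route as the paper: the separability condition $\lambda_{01}\le 1/2$ gives $c_1+c_2+|c_3|\le 1$, which combined with the ordering $c_1\ge c_2\ge |c_3|$ yields $3|c_3|\le 1$, saturated at $c_1=c_2=-c_3=\frac{1}{3}$. Your explicit verification that the extremal point is a legitimate state (eigenvalues $\frac{1}{2},\frac{1}{6},\frac{1}{6},\frac{1}{6}$, sitting on the separability boundary) is a small addition that the paper leaves implicit.
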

\begin{proof}
Note that the correlation coefficients of the separable Bell-diagonal states with the constraints $c_1,c_2 \ge 0$, $c_3 \le 0$ and $c_1 \ge c_2 \ge |c_3|$ satisfy the relations
$ c_1 + c_2 + |c_3| \le 1$, and 
 $ 3|c_3|  \le  c_1 + c_2 + |c_3|   \le  1$.
  This implies that $|c_3| \le 1/3$ within separable Bell-diagonal states.
  Therefore, the  separable Bell-diagonal state  which has $c_1=c_2=-c_3=1/3$
 has the maximal Schr\"{o}dinger strength  in the context of the three-settings scenario.
\end{proof} 
 The above values of the correlation coefficients $|c_1|=|c_2|=|c_3|=\frac{1}{3}$, lead to the   
 efficiency ${{P}_{{\rm min} }}=\frac{1}{2}(1+\frac{1}{3\sqrt{3}})\approx 0.596$, considerably above the classical bound ${{P}^{3\to 1}_{{\rm cl} }}=0.5$. This is calculated by using Eq.(\ref{3to1Eff}).

 It may be noted that the separable Bell-diagonal states with $c_1=c_2=1/2$ 
 and  
 $c_1=c_2=-c_3=1/3$  
 are the optimal separable states for 
 the $2\to 1$ and $3\to 1$ quantum codes, respectively \cite{BP14}. Since these states have maximal Schr\"{o}dinger strengths 
 in the two- and three-setting scenarios, respectively, within the separable Bell-diagonal states, 
 their corresponding values identify the optimal separable Bell-diagonal states for the $2\to 1$ and $3\to 1$ RAC protocols, respectively. On the other hand, the efficiencies (\ref{3to1Eff}) and (\ref{2to1Eff})
 are not monotonic functions of geometric  discord for different classes of states \cite{BP14}. Hence, 
the amount of geometric discord
does not specify the state yielding optimal efficiency of the protocol. Therefore, geometric discord does not serve as the resources for the above two RACs. Here, we identify superunsteerability as the unique resource for RACs assisted by Bell-diagonal states. 
   
   Note that the correlation coefficients in the Bell-diagonal states can be experimentally 
determined by measuring the Pauli observables since $c_i=\braket{\sigma_i \otimes \sigma_i}=\tr(\tau \sigma_i \otimes \sigma_i)$. From these joint expectation values,  the Schr\"{o}dinger strength of the Bell-diagonal states in the two- and three-setting scenarios can be determined  as the second and third largest of these three  joint expectation values, respectively.

\section{Conclusions}\label{sec5} 

In this work, we propose a hitherto unexplored resource for quantum information processing by unsteerable states.  We formulate a quantifier of nonclassical correlation going beyond steering which we call Schr\"{o}dinger strength. A non-vanishing value of this measure 
 demonstrates a phenomenon called superunsteerability \cite{DBD+17} which can occur even
for separable states.  We derive analytical expressions 
for this measure for Bell-diagonal states in the two- and three-setting steering scenarios which are related to the $2  \to 1$ and $3 \to 1$ RAC protocols, respectively, assisted
by finite shared randomness limited to two bits  within separable Bell diagonal states. We demonstrate that the Schr\"{o}dinger strength identifies the optimal states within separable Bell-diagonal states for both these protocols.  In other words, the maximal value of Schr\"{o}dinger strength for the
 corresponding separable Bell-diagonal state provides the optimal quantum efficiency of $2  \to 1$ and $3 \to 1$ RAC protocols assisted by separable Bell-diagonal states.

 The analysis presented here provides an unambiguous answer to the question as to which is the precise quantum resource for RACs  using separable Bell-diagonal states in the presence of finite shared randomness \cite{BP14}.  The fact that quantumness defined by superunsteerability is a more appropriate 
resource compared to discord for enabling RACs may be understood as follows. While discord is a property exclusive to
the given state, superunsteerabilty further captures the operational characteristics of the 
protocol, as we have shown through the linkage of steering scenarios with the corresponding RACs.  It remains to be studied  whether superunsteerability is the unique resource for the above two RACs in the presence of finite shared randomness using other separable states.
Moreover, further studies of local resources such as superlocality \cite{DW15,JAS17,JDS+18}
and superunsteerability \cite{DBD+17,DAS201855} for other quantum information processing tasks using
separable states admitting local correlations, should be worthwhile.  Such studies may provide the scope for shedding further light on the larger question of the boundary between classical and quantum theory.

\section*{Acknowledgements} C.J. acknowledges S. N. Bose Centre,
Kolkata for the postdoctoral fellowship and partial financial support
from the Ministry of Science and Technology, Taiwan (Grant
No.  108-2811-M-006-501). 
D.D. acknowledges the financial support from the University Grants
Commission (UGC), Government of India.
S.K. acknowledges
the support of a fellowship from Bose Institute, Kolkata.

\appendix

\section{The box $P_{US}(ab|xy;\rho_{sep})$ given by Eq. (\ref{optsst}) is unsteerable and  not super-unsteerabie} \label{ap1}

The correlation $P_{US}(ab|xy;\rho_{sep})$  given in Eq. (\ref{optsst}) (with $a,b,x,y \in \{0,1\}$) has the following LHV-LHS model with dimension of the hidden variable being equal to $2$:
\begin{equation}
P_{US}(ab|xy;\rho_{sep}) = \sum^{1}_{\lambda=0} p(\lambda) P(a|x, \lambda) P(b|y; \rho_{\lambda})
\end{equation}
where, $p(0) = p(1) = \frac{1}{2}$. 
\begin{equation}      
P(a|x, 0) =\begin{tabular}{c|cc}
\backslashbox{(y)}{(b)} & (0) & (1) \\\hline
(0) & $1$ & $0$  \\
(1) & $\frac{1}{2}$ & $\frac{1}{2}$  \\
\end{tabular},
\end{equation}
where each row and column corresponds to a fixed measurement $(y)$ and a fixed outcome $(b)$ respectively. Throughout the paper we will follow the same convention.
\begin{equation}      
P(a|x, 1) =\begin{tabular}{c|cc}
\backslashbox{(y)}{(b)} & (0) & (1) \\\hline
(0) & $0$ & $1$  \\
(1) & $\frac{1}{2}$ & $\frac{1}{2}$  \\
\end{tabular}.
\end{equation}
On the other hand, we have,
\begin{equation}      
P(b|y; \rho_{0}) =\begin{tabular}{c|cc}
\backslashbox{(y)}{(b)} & (0) & (1) \\\hline
(0) & $\dfrac{1+c_1-2 c_2}{2(1-c_2)}$ & $\dfrac{1- c_1}{2(1-c_2)}$  \\
(1) & $\frac{1}{2}$ & $\frac{1}{2}$  \\
\end{tabular}.
\end{equation}
Note that for $1 \ge c_1 \ge c_2 \ge |c_3|$ and $c_1,c_2 \ge 0$ each of the probability distributions satisfies $0 \leq P(b|y; \rho_{0}) \leq 1$. The above correlation on Bob's side can be reproduced by performing projective measurements corresponding to the observables $B_0=\sigma_x$ and $B_1=\sigma_y$ on the qubit state given by,
\begin{equation}
|\psi_0\rangle = \sqrt{\dfrac{1+c_1-2 c_2}{2(1-c_2)}} | + \rangle + \sqrt{\dfrac{1- c_1}{2(1-c_2)}} | - \rangle,
\end{equation}
where $|+\rangle$ and $|-\rangle$ are the eigenstates of the operator $\sigma_x$ corresponding to the eigenvalues $+1$ and $-1$ respectively. 
and
\begin{equation}      
P(b|y; \rho_{1}) =\begin{tabular}{c|cc}
\backslashbox{(y)}{(b)} & (0) & (1) \\\hline
(0) & $\dfrac{1- c_1}{2(1-c_2)}$ & $\dfrac{1 + c_1 - 2 c_2}{2(1-c_2)}$  \\
(1) & $\frac{1}{2}$ & $\frac{1}{2}$  \\
\end{tabular}.
\end{equation}
Note that for $1 \ge c_1 \ge c_2 \ge |c_3|$ and $c_1,c_2 \ge 0$ each of the probability distributions satisfies $0 \leq P(b|y, \rho_{1}) \leq 1$. The above correlation on Bob's side can be reproduced by performing projective measurements corresponding to the observables $B_0=\sigma_x$ and $B_1=\sigma_y$ on the qubit state given by,
\begin{equation}
|\psi_1\rangle = \sqrt{\dfrac{1- c_1}{2(1-c_2)}}  | + \rangle + \sqrt{\dfrac{1+c_1-2 c_2}{2(1-c_2)}} | - \rangle.
\end{equation}

Hence, the box $P_{US}(ab|xy;\rho_{sep})$ has LHV-LHS model with hidden variable dimension $2$. On the other hand, this box is produced by performing appropriate measurements on the $2 \otimes 2$ state given by Eq.(\ref{sepD=0}). Hence, the box $P_{US}(ab|xy;\rho_{sep})$ is unsteerable and is not super-unsteerable.

\section{Quantumness in the form of superunsteerability or EPR steering of the box $P(ab|xy;\tau)$ given by Eq. (\ref{optsst}) for $c_2 >0$.} \label{ap2}

Before going into the details, let us first define some notations which we will use later. In the case of two inputs per party and two-outputs per input, there are four possible deterministic boxes $P^{\alpha \beta}_D(a|x)$ (or, simply $P^{\alpha \beta}_D$)  on Alice's end and four possible deterministic boxes $P_D^{\gamma\epsilon}(b|y)$ (or, simply $P_D^{\gamma\epsilon}$) on Bob's end. These are given by,
\begin{equation}
P_D^{\alpha\beta}(a|x)=\left\{
\begin{array}{lr}
1, & a=\alpha x\oplus \beta\\
0 , & \text{otherwise}\\
\end{array}
\right. 
\label{}
\end{equation} 
and 
\begin{equation}
P_D^{\gamma\epsilon}(b|y)=\left\{
\begin{array}{lr}
1, & b=\gamma x\oplus \epsilon\\
0 , & \text{otherwise}.\\
\end{array}
\right.   
\label{}
\end{equation} 
Here, $\alpha,\beta,\gamma,\epsilon\in  \{0,1\}$ and  $\oplus$ denotes
addition modulo  $2$.

The box $P(ab|xy;\tau)$ given in Eq. (\ref{optsst}) (with $a,b,x,y \in \{0,1\}$) has the following form:
 \begin{equation}      
 \label{matrixptau}
P(ab|xy;\tau) =\begin{tabular}{c|cccc}
 \backslashbox{(x,y)}{(a,b)} & (0,0) & (0,1) & (1,0) & (1,1)\\\hline
(0,0) & $\frac{1+c_1}{4}$ & $\frac{1-c_1}{4}$ & $\frac{1-c_1}{4}$ & $\frac{1+c_1}{4}$ \\
(0,1) & $\frac{1}{4}$ & $\frac{1}{4}$ & $\frac{1}{4}$ & $\frac{1}{4}$ \\
(1,0) & $\frac{1}{4}$ & $\frac{1}{4}$ & $\frac{1}{4}$ & $\frac{1}{4}$ \\
(1,1) & $\frac{1+c_2}{4}$ & $\frac{1-c_2}{4}$ & $\frac{1-c_2}{4}$ & $\frac{1+c_2}{4}$ \\
\end{tabular},
\end{equation}
where each row and column corresponds to a fixed measurement setting $(xy)$ and a fixed outcome $(ab)$ respectively. $1 \ge c_1 \ge c_2 \ge |c_3|$ and $c_1,c_2 \ge 0$.

Let us try to construct LHV-LHS model of the box $P(ab|xy;\tau)$ with the dimension of the hidden variable being equal to $2$. Before proceeding, we want  to mention that in  case of the box $P(ab|xy;\tau)$, all the
marginal  probability  distributions of  Alice and  Bob  are maximally mixed:
\begin{equation}
\label{mar}
p_{\text{Alice}}(a|x;\tau)  =  p_{\text{Bob}}(b|y;\tau)  =  \frac{1}{2}  \forall a,b,x,y.
\end{equation}  
Let us now try  to  check whether  the box $P(ab|xy;\tau)$
can be decomposed in the following form:
\begin{equation} \label{model2dim}
P(ab|xy;\tau) = \sum^{1}_{\lambda=0} p(\lambda) P(a|x, \lambda) P(b|y; \rho_{\lambda}),
\end{equation}
where $p(0)=x_0$, $p(1)=x_1$  ($0 <x_0<1$, $0  <x_1<1$, $x_0+x_1  =1$). Let us  assume that Alice's
strategy   to be  deterministic   one, i. e.,  each   of  the   two  boxes $P(a|x, 0)$ and  $P(a|x, 1)$ in the above decomposition
belongs to any one among $P_D^{00}$, $P_D^{01}$, $P_D^{10}$ and $P_D^{11}$. In
order to satisfy the  marginal probabilities for Alice $p_{\text{Alice}}(a|x;\tau) = \frac{1}{2}$ $\forall a,x$,
the only  two possible choices  of $P(a|x, 0)$  and $P(a|x, 1)$ are:
\begin{enumerate}
\item $P_D^{00}$ and $P_D^{01}$ with $x_0=x_1=\frac{1}{2}$
\item $P_D^{10}$ and $P_D^{11}$ with $x_0=x_1=\frac{1}{2}$.
\end{enumerate}

Now, it can be easily checked  that none of these two possible choices
will  satisfy  all  the  bipartite  joint  probability  distributions of the box $P(ab|xy;\tau)$ with $c_2 >0$,  simultaneously, for any choice of $P(b|y; \rho_{0})$ and $P(b|y; \rho_{1})$. 
It  is, therefore,  impossible to  construct LHV-LHS model of the box $P(ab|xy;\tau)$ with the dimension of the hidden variable being equal to $2$  where Alice uses different deterministic strategy at each $\lambda$.

The box $P(ab|xy;\tau)$ cannot be reproduced
by a LHV-LHS model with hidden  variable of dimension $2$
even if Alice uses \textit{nondeterministic strategy} for each $\lambda$.
To see this, we note that in $2 - 2 - 2$ Bell-scenario (involving $2$ parties, $2$ measurement settings per party, $2$ outcomes per setting), hidden
variable with dimension $d_\lambda \leq 4$ is sufficient for reproducing
any local correlation \cite{DW15}. Since unsteerable correlations form
a subset of the local correlations, in $2 - 2 - 2$ steering-scenario
hidden variable with dimension $d_\lambda \leq 4$ is sufficient for
reproducing any unsteerable correlation. Hence, a LHV-LHS model with hidden  variable of dimension $2$ of the  box $P(ab|xy;\tau)$ can be achieved by constructing a LHV-LHS model of the  box  $P(ab|xy;\tau)$ with hidden variable of dimension $3$ or $4$ with different deterministic distributions at Alice's side 
followed by taking equal probability distributions at Bob's side as common and making the 
corresponding distributions at Alice's side non-deterministic.

Let us now try check whether the  box $P(ab|xy;\tau)$ can be simulated by  a LHV-LHS model with hidden  variable of dimension $3$ where Alice uses different deterministic strategy at each $\lambda$. 
In this case, we assume that the box can be decomposed in the following way:
\begin{equation}
P(ab|xy;\tau) = \sum^{2}_{\lambda=0} p(\lambda) P(a|x, \lambda) P(b|y; \rho_{\lambda}),
\end{equation}
Here, $p(0)= x_0$,  $p(1) = x_1$, $p(2)  = x_2$ ($0 <x_0<1$, $0  <x_1<1$, $0 <x_2<1$,
$x_0+x_1+x_2 =1$) and $P(a|x, \lambda)$ are different deterministic boxes. 
Since Alice's boxes are  deterministic, the three boxes 
$P(a|x, 0)$, $P(a|x, 1)$ and  $p(a|x, 2)$  must be  equal to  any three among $P_D^{00}$,
$P_D^{01}$, $P_D^{10}$  and $P_D^{11}$. But any  such combination will
not satisfy  the marginal  probabilities $p_{\text{Alice}}(a|x;\tau) = \frac{1}{2}$ $\forall a,x$ for  Alice. This implies that the  box $P(ab|xy;\tau)$ cannot be simulated by  a LHV-LHS model with hidden  variable of dimension $3$ where Alice uses different deterministic strategy at each $\lambda$. 

Therefore, we can state that if the box $P(ab|xy;\tau)$ with $c_2 >0$ has LHV-LHS model, then the dimension of the hidden variable in that LHV-LHS model must be equal to $4$ when Alice uses deterministic strategies.  

Suppose  the box $P(ab|xy;\tau)$ has the following LHV-LHS model:
\begin{equation}
\label{new1}
P(ab|xy;\tau) = \sum_{\lambda=0}^{3} p(\lambda) P(a|x, \lambda) P(b|y; \rho_{\lambda}),
\end{equation} 
where $P(a|x, \lambda)$ are different deterministic boxes
and either any three of the four boxes $P(b|y; \rho_{\lambda})$ 
are equal to each other, or there exists two sets each containing two equal boxes $P(b|y; \rho_{\lambda})$; $0 < p(\lambda) < 1$ for $\lambda$ = $0,1,2,3$; 
$\sum_{\lambda=0}^{3} p(\lambda) = 1$. Then taking equal boxes 
$P(b|y; \rho_{\lambda})$ at Bob's side as common and making corresponding box 
at Alice's side non-deterministic will reduce the dimension of the hidden variable
from $4$ to $2$. For example, let us consider
\be
P(b|y; \rho_0) = P(b|y, \rho_1) = P(b|y; \rho_2).
\ee
Now in order to satisfy Alice's marginal given by Eq. (\ref{mar}), one must take $p(0)$ = $p(1)$ = $p(2)$ = $p(3)$ = $\frac{1}{4}$. 
Hence, the decomposition (\ref{new1}) can be written as, 
\begin{align}
P(ab|xy;\tau) = & q(0) Q(a|x, 0) P(b|y; \rho_0) \nonumber \\
&+ p(3) P(a|x, 3) P(b|y; \rho_3),
\label{new2}
\end{align}
where  
\be
Q(a|x, 0) = \dfrac{P(a|x, 0) + P(a|x, 1) + P(a|x, 2)}{3},
\ee
which is a non-deterministic box at Alice's side, and
$q(0) = \frac{3}{4}$
The decomposition (\ref{new2}) represents a LHV-LHS model of the box $P(ab|xy;\tau)$ with hidden  variable of dimension $2$ with different deterministic/non-deterministic boxes at Alice's side. Now in this protocol, considering arbitrary boxes $P(b|y; \rho_{\lambda})$ at Bob's side (without considering any constraint),
it can be checked
that all the bipartite distributions of the box $P(ab|xy;\tau)$ with $c_2 >0$ are not reproduced simultaneously. Hence, this will hold when $P(b|y; \rho_{\lambda})$ have quantum realizations.

There are the following other cases in which the dimension of the hidden variable in the LHV-LHS model of the box $P(ab|xy;\tau)$
can be reduced from $4$ to $2$: 
\begin{center}
$P(b|y; \rho_0) = P(b|y; \rho_2) = P(b|y; \rho_3)$;  \\ 
$P(b|y; \rho_0) = P(b|y; \rho_1) = P(b|y; \rho_3)$;   \\
$P(b|y; \rho_1) = P(b|y; \rho_2) = P(b|y; \rho_3)$;  \\
$P(b|y; \rho_0) = P(b|y; \rho_1)$ as well as $P(b|y; \rho_2) = P(b|y; \rho_3)$;\\ 
$P(b|y; \rho_0) = P(b|y; \rho_2)$ as well as $P(b|y; \rho_1) = P(b|y; \rho_3)$;\\ 
$P(b|y; \rho_0) = P(b|y; \rho_3)$ as well as $P(b|y; \rho_1) = P(b|y; \rho_2)$. \\
\end{center}
Now in any of these possible cases, considering arbitrary boxes 
$P(b|y; \rho_{\lambda})$ at Bob's side (without considering any constraint), it can be checked that all the bipartite distributions of the box $P(ab|xy;\tau)$ with $c_2 >0$ are not reproduced simultaneously. 
Hence, this also holds when the boxes $P(b|y; \rho_{\lambda})$ have quantum realizations.

Hence, either the box $P(ab|xy;\tau)$ with $c_2 >0$ does not have LHV-LHS model or it is impossible to reduce the dimension of the hidden variable in the LHV-LHS model of the box $P(ab|xy;\tau)$ with $c_2 >0$ from $4$ to $2$.

It  is, therefore,  impossible to  reproduce the box  $P(ab|xy;\tau)$ (with $c_2 >0$) by LHV-LHS model with hidden variable dimension $2$ and with deterministic/non-deterministic 
boxes at Alice's side.

It can be checked that the box $P(ab|xy;\tau)$ with $c_2 >0$  is  non-product for $c_2 > 0$.  It  is,  therefore,
impossible to construct a LHV-LHS model of the box  $P(ab|xy;\tau)$ (with $c_2 >0$) with hidden variable of dimension $1$. 

Hence, one can conclude that the box $P(ab|xy;\tau)$ with $c_2 > 0$ is either steerable or has LHV-LHS model with the dimension of the hidden variable being greater than $2$. On the other hand, the box is produced from the $2 \otimes 2$ Bell diagonal state given by Eq.(\ref{BD}). 
Therefore, the box $P(ab|xy;\tau)$ with $c_2 > 0$ is either steerable or super-unsteerable.

\section{The box $P_{US}(ab|xy;\rho_{sep})$ given by Eq. (\ref{optsst3}) is unsteerable} \label{ap3}

The correlation $P_{US}(ab|xy;\rho_{sep})$ given in Eq. (\ref{optsst3}) (with $a,b \in \{0,1\}$ and with $x,y \in \{0,1, 2\}$) has the following LHV-LHS model with dimension of the hidden variable being equal to $4$:
\begin{equation}
P_{US}(ab|xy;\rho_{sep}) = \sum^{3}_{\lambda=0} p(\lambda) P(a|x, \lambda) P(b|y; \rho_{\lambda})
\end{equation}
where, $p(0) = p(1) = p(2) = p(3) = \frac{1}{4}$. 
\begin{equation}      
P(a|x, 0) =\begin{tabular}{c|cc}
\backslashbox{(y)}{(b)} & (0) & (1) \\\hline
(0) & $1$ & $0$  \\
(1) & $1$ & $0$  \\
(2) & $\frac{1}{2}$ & $\frac{1}{2}$  \\
\end{tabular},
\end{equation}

\begin{equation}      
P(a|x, 1) =\begin{tabular}{c|cc}
\backslashbox{(y)}{(b)} & (0) & (1) \\\hline
(0) & $1$ & $0$  \\
(1) & $0$ & $1$  \\
(2) & $\frac{1}{2}$ & $\frac{1}{2}$  \\
\end{tabular},
\end{equation}
\begin{equation}      
P(a|x, 2) =\begin{tabular}{c|cc}
\backslashbox{(y)}{(b)} & (0) & (1) \\\hline
(0) & $0$ & $1$  \\
(1) & $1$ & $0$  \\
(2) & $\frac{1}{2}$ & $\frac{1}{2}$  \\
\end{tabular},
\end{equation}
and 
\begin{equation}      
P(a|x, 3) =\begin{tabular}{c|cc}
\backslashbox{(y)}{(b)} & (0) & (1) \\\hline
(0) & $0$ & $1$  \\
(1) & $0$ & $1$  \\
(2) & $\frac{1}{2}$ & $\frac{1}{2}$  \\
\end{tabular}.
\end{equation}
On the other hand, we have,
\begin{equation}      
P(b|y; \rho_{0}) =\begin{tabular}{c|cc}
\backslashbox{(y)}{(b)} & (0) & (1) \\\hline
(0) & $\dfrac{1+c_1+2 c_3}{2(1+c_3)}$ & $\dfrac{1-c_1}{2(1+c_3)}$  \\
(1) & $\dfrac{1+c_2+2 c_3}{2(1+c_3)}$ & $\dfrac{1-c_2}{2(1+c_3)}$  \\
(2) & $\frac{1}{2} [ 1 - f(c_1, c_2, c_3)]$ & $\frac{1}{2} [ 1 + f(c_1, c_2, c_3)]$  \\
\end{tabular},
\end{equation}
where, $f(c_1, c_2, c_3)$ = $\frac{\sqrt{(1-c_1)(1+c_1+ 2 c_3)-(c_2 + c_3)^2}}{1+c_3}$. Note that for $1 \ge c_1 \ge c_2 \ge |c_3|$; $c_1,c_2 \ge 0$ and $c_3 \leq 0$ each of the probability distributions satisfies $0 \leq P(b|y, \rho_{0}) \leq 1$. The above correlation on Bob's side can be reproduced by performing projective measurements corresponding to the observables $B_0=\sigma_x$, $B_1=\sigma_y$ and $B_2 = \sigma_z$ on the qubit state given by,
\begin{equation}
|\psi_0\rangle = \sqrt{\frac{1+c_1+2 c_3}{2(1+c_3)}} | + \rangle + e ^{i \phi_0} \sqrt{\frac{1-c_1}{2(1+c_3)}} | - \rangle,
\end{equation}
where $|+\rangle$ and $|-\rangle$ are the eigenstates of the operator $\sigma_x$ corresponding to the eigenvalues $+1$ and $-1$ respectively. $ \phi_0 =\sin^{-1}  \frac{c_2 + c_3}{\sqrt{1-c_1^2 - 2 c_1 c_3 + 2 c_3}}$. It can be easily checked that $|\frac{c_2 + c_3}{\sqrt{1-c_1^2 - 2 c_1 c_3 + 2 c_3}}|$ $\leq 1$ for $1 \ge c_1 \ge c_2 \ge |c_3|$; $c_1,c_2 \ge 0$; $c_3 \leq 0$ and $c_1 + c_2 + c_3 \leq 0$ (which comes from the positivity condition of the lowest eigenvalue of the Bell diagonal state).

\begin{equation}      
P(b|y; \rho_{1}) =\begin{tabular}{c|cc}
\backslashbox{(y)}{(b)} & (0) & (1) \\\hline
(0) & $\dfrac{1+c_1+2 c_3}{2(1+c_3)}$ & $\dfrac{1-c_1}{2(1+c_3)}$  \\
(1) & $\dfrac{1-c_2}{2(1+c_3)}$ & $\dfrac{1+c_2+2 c_3}{2(1+c_3)}$  \\
(2) & $\frac{1}{2} [ 1 + f(c_1, c_2, c_3)]$ & $\frac{1}{2} [ 1 - f(c_1, c_2, c_3)]$  \\
\end{tabular},
\end{equation}
where, $f(c_1, c_2, c_3)$ = $\frac{\sqrt{(1-c_1)(1+c_1+ 2 c_3)-(c_2 + c_3)^2}}{1+c_3}$. Note that for $1 \ge c_1 \ge c_2 \ge |c_3|$; $c_1,c_2 \ge 0$ and $c_3 \leq 0$ each of the probability distributions satisfies $0 \leq P(b|y, \rho_{1}) \leq 1$. The above correlation on Bob's side can be reproduced by performing projective measurements corresponding to the observables $B_0=\sigma_x$, $B_1=\sigma_y$ and $B_2 = \sigma_z$ on the qubit state given by,
\begin{equation}
|\psi_1\rangle = \sqrt{\frac{1+c_1+2 c_3}{2(1+c_3)}} | + \rangle + e ^{i \phi_1} \sqrt{\frac{1-c_1}{2(1+c_3)}} | - \rangle,
\end{equation}
where $ \phi_1 = \pi + \sin^{-1}  \frac{c_2 + c_3}{\sqrt{1-c_1^2 - 2 c_1 c_3 + 2 c_3}}$.

\begin{equation}      
P(b|y; \rho_{2}) =\begin{tabular}{c|cc}
\backslashbox{(y)}{(b)} & (0) & (1) \\\hline
(0) & $\dfrac{1-c_1}{2(1+c_3)}$ & $\dfrac{1+c_1+2 c_3}{2(1+c_3)}$ \\
(1) & $\dfrac{1+c_2+2 c_3}{2(1+c_3)}$ & $\dfrac{1-c_2}{2(1+c_3)}$  \\
(2) & $\frac{1}{2} [ 1 + f(c_1, c_2, c_3)]$ & $\frac{1}{2} [ 1 - f(c_1, c_2, c_3)]$  \\
\end{tabular},
\end{equation}
where, $f(c_1, c_2, c_3)$ = $\frac{\sqrt{(1-c_1)(1+c_1+ 2 c_3)-(c_2 + c_3)^2}}{1+c_3}$. Note that for $1 \ge c_1 \ge c_2 \ge |c_3|$; $c_1,c_2 \ge 0$ and $c_3 \leq 0$ each of the probability distributions satisfies $0 \leq P(b|y, \rho_{2}) \leq 1$. The above correlation on Bob's side can be reproduced by performing projective measurements corresponding to the observables $B_0=\sigma_x$, $B_1=\sigma_y$ and $B_2 = \sigma_z$ on the qubit state given by,
\begin{equation}
|\psi_2\rangle = \sqrt{\frac{1-c_1}{2(1+c_3)}} | + \rangle + e ^{i \phi_2} \sqrt{\frac{1+c_1+2 c_3}{2(1+c_3)}} | - \rangle,
\end{equation}
where $ \phi_2 = \pi - \sin^{-1}  \frac{c_2 + c_3}{\sqrt{1-c_1^2 - 2 c_1 c_3 + 2 c_3}}$.

\begin{equation}      
P(b|y; \rho_{3}) =\begin{tabular}{c|cc}
\backslashbox{(y)}{(b)} & (0) & (1) \\\hline
(0) & $\dfrac{1-c_1}{2(1+c_3)}$ & $\dfrac{1+c_1+2 c_3}{2(1+c_3)}$ \\
(1) & $\dfrac{1-c_2}{2(1+c_3)}$ & $\dfrac{1+c_2+2 c_3}{2(1+c_3)}$  \\
(2) & $\frac{1}{2} [ 1 - f(c_1, c_2, c_3)]$ & $\frac{1}{2} [ 1 + f(c_1, c_2, c_3)]$  \\
\end{tabular},
\end{equation}
where, $f(c_1, c_2, c_3)$ = $\frac{\sqrt{(1-c_1)(1+c_1+ 2 c_3)-(c_2 + c_3)^2}}{1+c_3}$. Note that for $1 \ge c_1 \ge c_2 \ge |c_3|$; $c_1,c_2 \ge 0$ and $c_3 \leq 0$ each of the probability distributions satisfies $0 \leq P(b|y, \rho_{3}) \leq 1$. The above correlation on Bob's side can be reproduced by performing projective measurements corresponding to the observables $B_0=\sigma_x$, $B_1=\sigma_y$ and $B_2 = \sigma_z$ on the qubit state given by,
\begin{equation}
|\psi_3\rangle = \sqrt{\frac{1-c_1}{2(1+c_3)}} | + \rangle + e ^{i \phi_3} \sqrt{\frac{1+c_1+2 c_3}{2(1+c_3)}} | - \rangle,
\end{equation}
where $ \phi_3 = - \sin^{-1}  \frac{c_2 + c_3}{\sqrt{1-c_1^2 - 2 c_1 c_3 + 2 c_3}}$.

\bibliography{JM}

\end{document}